\newcommand{\catname}[1]{{\normalfont\textbf{#1}}}
\newcommand{\Cont}{\catname{Cont}}
\newcommand{\Type}{\catname{Type}}
\newcommand{\kleisli}{\mathbin{{>}\!{=}\!{>}}}
\newcommand{\cont}[3]{(#1 : #2\rhd#3\ #1)}
\newcommand{\IO}{\texttt{IO}}
\newcommand{\String}{\texttt{String}}
\newcommand{\CLI}{\mathit{CLI}}
\newcommand{\All}{\mathit{All}}
\newcommand{\eseq}{\ \mathit{>\!>}\ }
\newcommand{\Table}{\texttt{Table}}
\newcommand{\File}{\texttt{File}}
\newcommand{\FS}{\mathit{FS}}
\newcommand{\toptop}{(\top\rhd\top)}
\spnewtheorem{prop}[definition]{Proposition}{\bfseries}{\itshape}
\spnewtheorem*{intu}{Intuition}{\bfseries}{\upshape}
\spnewtheorem*{remarkno}{Remark}{\mdseries}{\itshape}
    \newcommand{\formalisationURL}{}
    \newcommand{\formalisationURL}{https://andrevidela.com/aplas-code}
\newcommand{\formalised}{{\color{NavyBlue!75!White}{\raisebox{-0.5pt}{\scalebox{0.8}{\faCog}}}}}
\newcommand{\flinkurl}[1]{\href{#1}{\formalised}}
\newcommand{\flink}[1]{\flinkurl{\formalisationURL\##1}}
\newcommand{\strstr}{(\String\rhd \String)}
\title{Container Morphisms for Composable Interactive Systems}
    \author{Anonymous}
    \author{André Videla}
    \institute{University of Strathclyde \email{andre.videla@strath.ac.uk}}
\begin{document}

\maketitle

\begin{abstract}
This paper provides a mathematical framework for
client-server communication that results in a modular and type-safe architecture.
It is informed and motivated by the software engineering practice of developing server backends
with a database layer and a frontend, all of which communicate with a notion of
request/response.
I make
use of dependent types to ensure the request/response relation matches and show how this idea
fits in the broader context of containers and their morphisms. Using the category of containers
and their monoidal products, I define monads on containers that mimic their functional programming
counterparts, and using the Kleene star, I describe stateful protocols in the same system.
    \keywords{API \and Dependent types \and Category of containers \and Databases \and Lenses \and HTTP webserver}
\end{abstract}

\section{Introduction}

There is a plethora of tools to write server backends (Ruby on Rails, Django, NodeJS), but
those libraries do not draw from existing mathematical theories, such as the pi-calculus or other
process calculi, and therefore, do not enjoy a common understanding
driving their design. As a result, there is no overarching story around combining them
together.\\
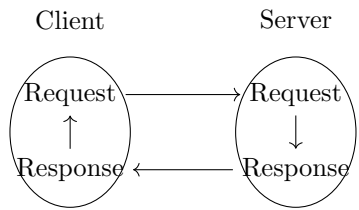
\begin{wrapfigure}{l}{0.5\textwidth}
    \vspace{-1em}
\centering
\begin{tikzpicture}
    \node (client) at (0,2) {Client};
    \node (server) at (3,2) {Server};
    \node (request-cli) at (0,1) {Request};
    \node (request-ser) at (3,1) {Request};
    \node (response-cli) at (0,0) {Response};
    \node (response-ser) at (3,0) {Response};

    \draw[->] (request-cli) -- (request-ser);
    \draw[->] (request-ser) -- (response-ser);
    \draw[->] (response-ser) -- (response-cli);
    \draw[->] (response-cli) -- (request-cli);

    \draw[-] (3,0.5) ellipse (0.8 and 1.0);
    \draw[-] (0,0.5) ellipse (0.8 and 1.0);

\end{tikzpicture}
\centering
    \caption{A binary client-server model, events read clockwise.}\label{fig:trad}
    \vspace{-1.7em}
\end{wrapfigure}
To work towards an implementable theory that explains what they share and how their common parts enable
them to be combined, I present a revised view of client-server communication that I
contrast with a more \say{traditional} view of client-server interactions, characterised by being binary (fig \ref{fig:trad}).
Indeed, in the binary view , clients send requests and expect matching
responses from those requests. Dually, servers await requests and provide responses to the client.
What this picture does not tell us is how those requests and responses get further processed by each end.
The lack of treatment for subsequent processing of request or response renders the binary
model fundamentally \emph{uncompositional}.
A compositional framework should not only explain client-server communication but also
handle hidden protocols within clients and servers, for example, a server calling another one
as part of its core functionality.

\def\lowest{1}
\def\lowmid{1.8}
\def\himid{3.8}
\def\highest{5}
\def\arrlen{0.2}

\newcommand{\msgarrowr}[3]
{\draw [->] (#1 -0.5+\arrlen,\himid) -- (#1 + 0.5 -\arrlen,\himid) node[midway, above] (#2) {#3};}
\newcommand{\msgarrowl}[3]
{\draw [<-] (#1 -0.5+\arrlen,\lowmid) -- (#1 + 0.5 -\arrlen,\lowmid) node[midway, above] (#2) {#3};}
\definecolor{pastelblue}{RGB}{173, 216, 230}
\definecolor{pastelyellow}{RGB}{253, 253, 150}
\definecolor{pastelgreen}{RGB}{176, 254, 230}
\definecolor{pastelpink}{RGB}{255, 182, 193}

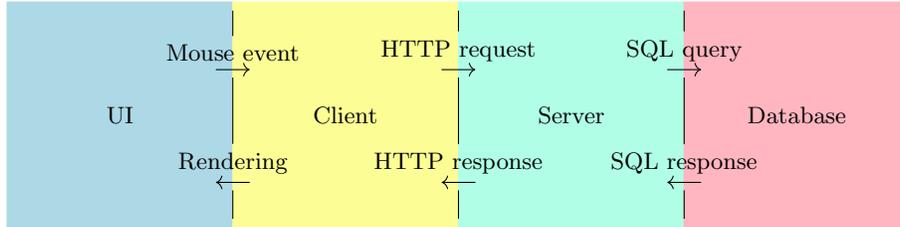
\begin{figure}[h]

\centering
    \begin{tikzpicture}[scale=0.75]

\filldraw[fill=pastelblue, draw=pastelblue] (-2, \lowest) rectangle (2,\highest);
\filldraw[fill=pastelyellow, draw=pastelyellow] (2, \lowest) rectangle (6,\highest);
\filldraw[fill=pastelgreen, draw=pastelgreen] (6, \lowest) rectangle (10,\highest);
\filldraw[fill=pastelpink, draw=pastelpink] (10, \lowest) rectangle (14,\highest);

\node (ui-cli-top) at (2, \highest) {};
\node (ui-cli-up-mid) at (2, 4) {};
\node (ui-cli-down-mid) at (2, \lowmid) {};
\node (ui-cli-bot) at (2, \lowest) {};

\node (cli-ser-top) at (6, \highest) {};
\node (cli-ser-up-mid) at (6, 4) {};
\node (cli-ser-down-mid) at (6, \lowmid) {};
\node (cli-ser-bot) at (6, \lowest) {};

\node (ser-db-top) at (10, \highest) {};
\node (ser-db-up-mid) at (10, 4) {};
\node (ser-db-down-mid) at (10, \lowmid) {};
\node (ser-db-bot) at (10, \lowest) {};

    \msgarrowr{2}{mid-up-left}{Mouse event}
    \msgarrowl{2}{mid-down-left}{Rendering}
    \msgarrowr{6}{mid-up-center}{HTTP request}
    \msgarrowl{6}{mid-down-center}{HTTP response}
    \msgarrowr{10}{mid-up-right}{SQL query}
    \msgarrowl{10}{mid-down-right}{SQL response}

\node at (0, 3) {UI};
\node at (4, 3) {Client};
\node at (8, 3) {Server};
\node at (12, 3) {Database};
\draw [-, shorten >=3pt] (ui-cli-bot) -- (mid-down-left);
\draw [-, shorten <=3pt] (mid-up-left) -- (mid-down-left);
\draw [-] (mid-up-left) -- (ui-cli-top);

\draw [-, shorten >=3pt] (cli-ser-bot) -- (mid-down-center);
\draw [-, shorten <=3pt] (mid-up-center) -- (mid-down-center);
\draw [-] (mid-up-center) -- (cli-ser-top);

\draw [-, shorten >=3pt] (ser-db-bot) -- (mid-down-right);
\draw [-, shorten <=3pt] (mid-up-right) -- (mid-down-right);
\draw [-] (mid-up-right) -- (ser-db-top);

\end{tikzpicture}
    \caption{An illustration of my revised compositional client-server view. Each color represents a different piece of software, each
    boundary between colors represents a protocol with messages and responses. The
    lifecycle of a single interaction can be read clockwise starting from the $UI$.}
    \label{new-client-server}
    \vspace{-1em}
\end{figure}

Figure~\ref{new-client-server} shows a simplified example of the interaction model I explore here, where events
emerge from a client, then are sent through multiple layers of software, and finally a response is produced.
In this revised view of client-server communication, the focus is on the interface between multiple
systems that communicate via different protocols. For a web application, the protocol is HTTP and the
interface is a list of endpoints, traditionally called the \say{Application Programming Interface}
or API. The server itself communicates with a database, the protocol is SQL, and the interface is
the language of valid SQL queries.

My claim is that containers and their morphisms~\cite{abbottCategoriesContainers2003}
accurately describe multi-tier applications like the
one pictured in Figure~\ref{new-client-server}.
Containers themselves describe the allowed request-response patterns between components.
Morphisms between containers describe how requests at one layer are translated to requests at the
next, and moreover how matching responses are translated back. We shall see how containers' ability
to express the dependency of allowed responses on requests means that we can not only construct rich
APIs but also describe in detail the implementation of those APIs in terms of other APIs.
We shall see how the structure of the category of containers plays an important role in guiding us
toward reusable abstractions~\ref{prop:cont-cat}. I have implemented this framework in Idris, which means
that anyone can compile and run those programs. To demonstrate the expressiveness of the framework I
construct an example To-do list application in Section~\ref{sec:demo-todo} and implement an API to
write files in a contemporary operating system in Section~\ref{sec:demo-fs}.
Additionally, category theory offers the means to talk about Kleisli composition, monads, and comonads, which are
primordial tools to compose effectful programs.

\subsection{Contributions}
My contributions are as follows:
\begin{enumerate}
    \item I provide a new interpretation of containers as APIs, or \say{process boundaries},
        and their morphisms as API-transformers.
    \item I give a notion of session via the Kleene star on containers~\ref{def:kleene}, and a notion of
        monadic computing~\ref{def:maybe-monad}.
    \item For the above structure, I give mechanised proofs for their properties. They are
        accessible via the \formalised\ icon\ifthenelse{\boolean{anon}}{(\textbf{link redacted for review, open index.html instead})}{}.
        For the proofs, I use Idris2~\cite{bradyIdrisGeneralpurposeDependently2013,bradyIdrisQuantitativeType2021}
        not only because it is an implementation of dependent type theory~\cite{martin-lofIntuitionisticTheoryTypes1975} but also because it provides bindings to common software libraries such as NodeJS, SQLite, and more.
    \item I employ this model by building a to-do app
        using the same definitions as the ones used for proofs. This approach results
        in a correct-by-construction running executable with the expected functionality. As an example
        of a stateful API, I show how to replicate the interaction model of a filesystem.

\end{enumerate}

\subsection{Related Work}

The work by Hancock and Setzer~\cite{hancockInteractiveProgramsDependent2000} showed us how to do
effectful programming using containers and dependent types. Their approach centers on
writing a program \emph{within} a container, representing trees of potential interaction paths
(what they call a \say{world}), rather than \emph{between}
containers. As we will see later~\ref{def:kleene}, programs whose specification
is given by a container do occur, and Hancock and Setzer's IO Trees could be used for that.
However, they won't be the focus of this work.
Hancock and Hyvernat~\cite{hancockProgrammingInterfacesBasic2009} worked on a category of
interfaces between processes but studied in the context of topology rather than application development.

In the work from Abbot, Altenkirch, Ghani, and
McBride~\cite{abbottDerivativesContainers2003,abbottCategoriesContainers2003,abbottContainersConstructingStrictly2005}, they
use containers to describe data structures, and operations on those data structures.
In it, containers are defined by
a set of shapes and a set of positions indexed by the shapes. I will use this terminology when
referencing containers outside of the request/response semantics I present here.

The mathematical tools used here are very similar to the ones used by the applied category
theory community. This work fits in the same family,
however, the semantics are established in the
software engineering practice of full-stack application development, rather than compositional
game theory, machine learning, or cybernetics.
Bolt et al. employ the category of lawless \emph{lenses},
for game theory~\cite{boltBayesianOpenGames2019} and Capucci et al. use it for
cybernetics~\cite{capucciFoundationsCategoricalCybernetics2022}.

\begin{figure}
    \vspace{-2em}
\begin{subfigure}[h]{0.4\linewidth}
\centering
\begin{align*}
    & \text{request} :\text{Type} \\
    & \text{response} : \text{request} \rightarrow \text{Type}
\end{align*}
    \caption{Our interpretation of the interface of an interactive system.}
\end{subfigure}
\hfill
\begin{subfigure}[h]{0.4\linewidth}
\centering
\begin{align*}
    & \text{response} : \text{Type} \\
    & \text{request} : \text{response} \rightarrow \text{Type}
\end{align*}
\caption{Spivak and Niu's interpretation of the interface for a dynamical system.}
\end{subfigure}%
    \vspace{-3em}
\end{figure}

Spivak and his team focus on the category $Poly$, the category of Polynomial
Functors and natural transformations between them, which is equivalent to our category of
containers. They also use it for cybernetics/dynamical
systems~\cite{niuPolynomialFunctorsMathematical2023} as well as
machine learning~\cite{fongBackpropFunctorCompositional2017}.

The biggest difference with their work on Poly lies in how we interpret containers as a type for requests/responses.
 What they consider responses, I think of as requests
and their notion of request is a notion of response here, the relationship is flipped.

The container morphisms presented here have a structure very similar to lenses as data
accessors for records.
So much so that they could be conceptualised as \emph{dependent lenses}, although they do not
provide access to a data structure. Instead, these morphisms move between process boundaries,
acting more like \emph{process accessors}, that grant access to a \emph{program}, rather than a data structure.
The previous work related
to the category of lenses~\cite{clarkeProfunctorOpticsCategorical2022}
shows that lenses take part in software development as functional data accessors for databases~\cite{bohannonRelationalLensesLanguage2006}
and data definitions~\cite{edwarda.kmettLensLibrary2024,oconnorFunctorLensApplicative2011,fosterCombinatorsBidirectionalTree2007}.
This versatility suggests that we could reuse the data accessor
aspect in this work, however, I will exclusively talk about processes and leave the integration
for future work.

The industry has produced a number of solutions to the problem described here.
Among those, we count the OpenAPI standard~\cite{OpenAPISpecificationV3}, GraphQL~\cite{GraphQL2021},
Typespec~\cite{Typespec2024}, Protobuf~\cite{ProtocolBuffers2024}, and more. These tools aim to simplify the task of writing client-server
software by providing a unique source of truth and relying on code generation to provide
implementations for both a client and a server. Similarly, libraries based on Object-Relational Mappings (ORMs~\footnote{Object–relational
mapping, Wikipedia:\\
\href{https://en.wikipedia.org/wiki/Object\%E2\%80\%93relational_mapping}{https://en.wikipedia.org/wiki/Object\%E2\%80\%93relational\_mapping}
}), like Django~\cite{Django2024} or GORM~\cite{GORM2024}, offer a way
to unify the database model and the data structure of the program, removing the API barrier between
the server code, and the database.
These tools achieve a small part of this work, to ease the communication between
binary processes.
However, they do not say anything about the software that fits in between the interfaces they
describe. What is more, because we do not have a formal definition of \say{interface} we cannot easily
talk about the compatibility between those systems.
A compositional framework ought to tell us
in which way two interfaces are incompatible and what is missing to make them compatible.

My goal here is to advance the state of the art by proposing a unifying theory that describes
composed systems interfacing with each other using different protocols. Additionally, this
theory needs to be usable in the context of a programming language to write those systems,
rather than purely describe them.

\section{Containers \& APIs}

The drive for this paper is the idea that containers are request/response pairs. These request/response pairs can be
conceptualised as the Application Programming Interface\footnote{ What is an API? by IBM:
\href{https://www.ibm.com/topics/api}{https://www.ibm.com/topics/api}} (API) of a given program. An API gives you what a
program can expect as inputs and what output to expect for a given input. For example, a C header file defines the API
of a C library, a list of endpoints defines the API of a microservice, etc. In what follows, I introduce the primary
tool for API description and manipulation: the category of containers. I will not provide an introduction to category
theory and will make use of functors, monads, comonads, and natural transformations\footnote{Resources for learning
category
theory:\href{https://github.com/prathyvsh/category-theory-resources}{https://github.com/prathyvsh/category-theory-resources}}.

\subsection{The category of containers}

Containers form a category,
and their morphisms allow to map from one container to another, in the interpretation of
containers as APIs, morphisms are \emph{API transformers}.

\begin{definition}[\flink{line23}]\label{def:container}
    A container consists of a type of requests, or queries $a : \Type$ and a type family of responses,
or results $a' : a \to \Type$.

\end{definition}

I write $(x : a \rhd a'\ x)$ to build the container with types $a : \Type$ and $a' : a \to \Type$.
When I use a built-in type without a binder like $(\String \rhd \String)$
I mean that the second argument is \emph{not} indexed by the first, it is equivalent to
$(x : \String \rhd \String)$. I write $\Type$ for the category of types and functions.

\begin{definition}[\flink{line30}]\label{def:cont-mor}
    Given two containers $a \rhd a'$, $b \rhd b'$,  a container morphism
    $f \lhd f' : a \triangleright a' \to b \triangleright b'$ consists of a function
    $f : a \to b$ mapping requests to requests, and a family of functions
    $f' : \forall (x : a). b' (f\ x) \to a'\ x$ mapping responses in the codomain to responses in the domain.
\end{definition}

\begin{intu} If the domain of the morphism is the "surface-level" API and the codomain is the "underlying" API,
then a morphism converts surface-level requests to underlying API requests. Additionally, it converts responses from the underlying
API back into responses that the surface-level API exposes.
\end{intu}

Because the second map of a morphism goes in the opposite direction as the first, I refer to the map on
requests as the \emph{forward part} of the morphism, and the map on responses as the
\emph{backward part} of the morphism, a terminology borrowed from lenses.
In the rest of the paper, I use the terms \emph{API transformer},
\emph{container morphism},  and \emph{lens},
interchangeably, and follow the naming convention of the related work.

We can develop the idea by studying an example. Let us imagine a web server with a client sending
HTTP requests, and the server returning HTTP
responses. We can define this API as the container:

\begin{align*}
    &\text{HTTP} : Container\\
    &\text{HTTP} = \cont{r}{\text{HTTPRequest}}{\text{HTTPResponse}}
\end{align*}

The implementation details of HTTP itself do not matter, we're only interested in providing the input/output relation here.

To implement such a server, we could provide a function $\cont{r}{\text{HTTPRequest}}{\text{HTTPResponse}}$ but this would be a very monolithic architecture.
Instead, we can see a server as a \emph{bidirectional program} built of multiple layers,
one of which could be a translation layer between HTTP request/response to database
query/response. If we assume there is a container to describe the API of a database
we could describe our server as a morphism:

\begin{minipage}{.5\linewidth}
\begin{equation*}
\begin{split}
    &\text{DB} : Container\\
    &\text{DB} = \cont{q}{\text{SQLQuery}}{\text{SQLResponse}}
\end{split}
\end{equation*}
\end{minipage}
\begin{minipage}{.5\linewidth}
\begin{equation*}
\begin{split}
    &\text{Server} : \text{HTTP} \Rightarrow \text{DB}\\
    &\text{Server} = \ldots
\end{split}
\end{equation*}
\end{minipage}
\\

Giving the implementation of a container morphism in one go still leads to a monolithic
architecture. Instead, we would like to build larger servers from smaller reusable
components. For this, we use the composition of container morphisms.

%
%
%

\begin{definition}[\flink{line37}]\label{def:cont-comp}
Given two morphisms $q_1 \lhd r_1 : a\Rightarrow b$ and $q_2 \lhd r_2 : b \Rightarrow c$, the composition
    of container morphisms $(q_1 \lhd r_1) ; (q_2 \lhd r_2)$ is given by $q_2 \circ q_1 \lhd \lambda x. r_1(x) \circ r_2
    (q_1(x)))$.
\end{definition}

\begin{intu}
    The main goal for composition is to translate requests from a surface-level API $(a \rhd a')$ down to
    an underlying API $(c \rhd c')$ \emph{through} an intermediate layer $(b \rhd b')$.
    This is achieved by function composition of the forward map.
    But the system also needs to \emph{translate back} responses from the underlying
    API into the surface-level responses. This is why we need to compose the backward map in reverse, starting
    from the underlying responses $c'$ we build responses back to the intermediate level $b'$ then back to $a'$.
\end{intu}
\begin{remarkno}
    There are other valid notions of composition for containers, this one matches what we expect from our
    API-transformer semantics.
\end{remarkno}

Using composition we can check that our structure is a category. The main benefit we get out of this
is the insurance that the systems we build remain composable regardless of their complexity.

\begin{prop}[{\flink{line45}}]\label{prop:cont-cat}
Containers form a category.
    Objects are containers~\ref{def:container}, morphisms are container morphisms~\ref{def:cont-mor},
    composition is given by morphism composition~\ref{def:cont-comp}, identity is given by the morphism $(id \lhd id)$.
\end{prop}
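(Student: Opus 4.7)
The plan is to verify the three categorical axioms (well-typed composition, associativity, and identity laws) by routine calculation, using the fact that the forward part of a composition is ordinary function composition and the backward part is a dependently typed function composition on the fibres.

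First, I would check that the composition from Definition~\ref{def:cont-comp} is well-typed as a morphism $(a \rhd a') \to (c \rhd c')$. The forward map $q_2 \circ q_1 : a \to c$ is immediate. For the backward map, given $x : a$ we need a function $c'(q_2(q_1\ x)) \to a'(x)$; we have $r_2(q_1\ x) : c'(q_2(q_1\ x)) \to b'(q_1\ x)$ and $r_1(x) : b'(q_1\ x) \to a'(x)$, so $r_1(x) \circ r_2(q_1\ x)$ has the desired type. Likewise, the identity morphism $(id \lhd id)$ typechecks because on the backward side the fibre $a'(id\ x) = a'(x)$ reduces definitionally, so the identity on $a'(x)$ can be used uniformly.

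Next, I would verify the unit laws. For left identity, composing $(id \lhd id) : (a \rhd a') \Rightarrow (a \rhd a')$ with $(f \lhd f') : (a \rhd a') \Rightarrow (b \rhd b')$ yields forward map $f \circ id = f$ and backward map $\lambda x.\ id \circ f'(id\ x) = \lambda x.\ f'(x)$, which is $f'$ up to eta. Right identity is symmetric. For associativity, given three composable morphisms $(f \lhd f'), (g \lhd g'), (h \lhd h')$, the forward part of both bracketings reduces to $h \circ g \circ f$, which is associative as ordinary function composition. The backward part of each bracketing, when fully unfolded, is $\lambda x.\ f'(x) \circ g'(f\ x) \circ h'(g(f\ x))$; the two ways to parenthesise this compose the same three functions at the same indices, and associativity of function composition on the fibres completes the check.

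The main obstacle is purely bookkeeping: keeping track of the dependent indices when composing the backward parts. Concretely, one must repeatedly rewrite expressions like $r_2(q_1\ x)$ at the correct index so that domains and codomains line up in the fibre over $x : a$. This is why the formal statement in the Idris development relies on a linked mechanised proof (\formalised) --- the types are awkward to manipulate on paper but force themselves to agree once one commits to the definitions. No genuine categorical difficulty arises; the proposition is essentially the standard fact that dependent lenses (or equivalently, morphisms of polynomial functors in the sense of~\cite{niuPolynomialFunctorsMathematical2023}) form a category, specialised to the request/response reading.
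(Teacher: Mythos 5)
Your proposal is correct and follows essentially the same route as the paper, which disposes of the proposition in one line by observing that a container morphism is a pair of morphisms in $\Type$ and $\Type^{op}$ so the laws are inherited from function composition. The only difference is one of detail: you explicitly carry out the fibre-wise bookkeeping for the backward maps (checking that $r_1(x) \circ r_2(q_1\ x)$ lands in the right fibre and that both bracketings of a triple composite unfold to $\lambda x.\ f'(x) \circ g'(f\ x) \circ h'(g(f\ x))$), which is precisely the content the paper's appeal to ``inheriting the laws'' leaves implicit and which its mechanised proof discharges.
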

\begin{proof}
    Because a container morphism is essentially a pair of morphisms in $\Type$ and in
    $\Type^{op}$ the laws
    hold by inheriting the laws from the underlying category.
\end{proof}

When writing software, we often want to keep an internal representation of the data that we
wish to work with. As an example, we can design a server architecture that works with an internal
API $\cont{x}{InternalQuery}{InternalResponse}$ and build the software around it. Using
the composition of container morphisms we can attach morphisms to either end of this
internal API to obtain different results, as we see in figure~\ref{fig:reusable}.
\begin{figure}[h]
\begin{minipage}[t]{.5\linewidth}
    \begin{alignat*}{1}
        &\mathit{CLI} : \strstr \Rightarrow \mathit{Internal}\\
        &\mathit{CLI} = \ldots\\[1.2em]
        &\mathit{WEB} : \mathit{HTTP} \Rightarrow \mathit{Internal}\\
        &\mathit{WEB} = \ldots
    \end{alignat*}
\end{minipage}%
\begin{minipage}[t]{.5\linewidth}
    \begin{alignat*}{1}
        &\mathit{CLIApp} : \strstr \Rightarrow \mathit{DB}\\
        &\mathit{CLIApp} = \mathit{CLI} ⨾ \mathit{toQuery}\\[1.2em]
        &\mathit{WEBApp} : \mathit{HTTP}\Rightarrow \mathit{DB}\\
        &\mathit{WEBApp} = \mathit{WEB} ⨾ \mathit{toQuery}\\
    \end{alignat*}
\end{minipage}\vspace{-15pt}
    \caption{Two applications with a reusable component $toQuery : \mathit{Internal} \Rightarrow \mathit{DB}$ and two different translation layers converting from
    different surface-level APIs}\label{fig:reusable}
\end{figure}

The domain and codomain of morphisms \emph{CLI} or \emph{WEB} in figure~\ref{fig:reusable} are not
particularly realistic since they never fail. In reality, parsing an HTTP request, or
a string, can result in failure. So the API transformers need to model a
potential failing translation as well.

\subsection{The Maybe Monad on containers}\label{the-maybe-monad-on-containers}

Containers need to model the ability to fail or give partial results all while
using total functions to remain within the realm of provably correct code.
In the example of figure~\ref{fig:reusable} the API transformer
$\strstr \Rightarrow Internal$ has a forward map that converts
requests of type $\String \to InternalQuery$. But it
is unlikely that every possible string of characters produces a valid
query. Instead, we would like to represent the fact that
only some strings result in a valid message with the function
$\String \to \mathit{Maybe}\ InternalQuery$.
This suggests that the morphism $\mathit{CLI}$ should have the type:

\begin{equation*}
    \begin{split}
        & \CLI : \strstr \Rightarrow \cont{x}{\texttt{Maybe}\ InternalQuery}{InternalResponse}\\
        & \CLI = \ldots
    \end{split}
\end{equation*}

However, this does not typecheck because the $InternalResponse$
has type $InternalQuery \to \Type$ but we are supplying it a
value wrapped around a $\mathit{Maybe}$. To fix this, we use the $\All$ unary relation on $Maybe$.

\noindent
\begin{definition}[{\flink{line50}}]\label{def:maybe-all}\\
    \begin{minipage}[t]{.60\linewidth}
    Given a container $(q \rhd r)$, we define \\$\cont{m}{Maybe\ q}{All_r}$ as the MaybeAll
    map on containers. With All defined as follows:
\end{minipage}%
\hfill
\begin{minipage}[t]{.35\linewidth}
\vspace{-2.5em}
\begin{equation*}
\begin{aligned}
    & \forall (p : a \to \Type) \\
    & \All\ (\text{Just}\ x) \mapsto p\ x\\[0.1em]
    & \All\ \text{Nothing} \mapsto \top
\end{aligned}
\end{equation*}
\end{minipage}%
\end{definition}

Where $\top$ is the terminal object in $\Type$ with constructor $()$.
Using definition~\ref{def:maybe-all}, we can write the type of the $\mathit{CLI'}$ morphism as $\strstr \Rightarrow MaybeAll\ Internal$.

This starts to look like good old functional programming with a maybe
monad~\cite{wadlerMonadsFunctionalProgramming1993}. In fact, it is a monad in the category of containers:

\begin{prop}[\flink{line85}] MaybeAll is a functor in containers.
\end{prop}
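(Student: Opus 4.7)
The plan is to verify the two pieces of functor data and then the two functor laws, each by case analysis on the \texttt{Maybe} argument. The object part is already given by Definition~\ref{def:maybe-all}; what remains is to define the action on morphisms and to check it respects identity and composition.

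First, given a container morphism $f \lhd f' : (a \rhd a') \Rightarrow (b \rhd b')$, I would define $\mathit{MaybeAll}(f \lhd f') : (\mathit{Maybe}\ a \rhd \mathit{All}\ a') \Rightarrow (\mathit{Maybe}\ b \rhd \mathit{All}\ b')$ by taking the forward map to be the standard \texttt{Maybe} functorial action on $f$ (i.e. $\mathit{Nothing} \mapsto \mathit{Nothing}$ and $\mathit{Just}\ x \mapsto \mathit{Just}\ (f\ x)$), and the backward map by case analysis on the input $x : \mathit{Maybe}\ a$. When $x = \mathit{Nothing}$, the forward map sends it to $\mathit{Nothing}$, so by Definition~\ref{def:maybe-all} both $\mathit{All}\ b'\ \mathit{Nothing}$ and $\mathit{All}\ a'\ \mathit{Nothing}$ reduce to $\top$, and the map is the identity $() \mapsto ()$. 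When $x = \mathit{Just}\ y$, the forward map sends it to $\mathit{Just}\ (f\ y)$, so $\mathit{All}\ b'\ (\mathit{Just}\ (f\ y))$ reduces to $b'\ (f\ y)$ and $\mathit{All}\ a'\ (\mathit{Just}\ y)$ reduces to $a'\ y$; I then apply $f'\ y$ which has exactly the right type.

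Next, I would verify the identity law $\mathit{MaybeAll}(\mathit{id} \lhd \mathit{id}) = \mathit{id} \lhd \mathit{id}$: on the forward part this is the \texttt{Maybe} functor's preservation of identity, and on the backward part both branches of the case analysis become identity. Then for composition, given $f \lhd f' : a \rhd a' \Rightarrow b \rhd b'$ and $g \lhd g' : b \rhd b' \Rightarrow c \rhd c'$, I would show $\mathit{MaybeAll}((f \lhd f'); (g \lhd g')) = \mathit{MaybeAll}(f \lhd f'); \mathit{MaybeAll}(g \lhd g')$ using Definition~\ref{def:cont-comp}. The forward equation is the \texttt{Maybe} functor's preservation of composition. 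For the backward equation, case on $x$: the $\mathit{Nothing}$ branch is trivial because everything collapses to maps $\top \to \top$, and the $\mathit{Just}\ y$ branch reduces on both sides to $f'\ y \circ g'\ (f\ y)$, matching the definition of composition in~\ref{def:cont-comp}.

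The main obstacle is not conceptual but bureaucratic: each backward computation must be justified by reducing $\mathit{All}$ on $\mathit{Just}$ and $\mathit{Nothing}$, and one has to be careful that the dependent types line up under the forward map before and after case splitting. In the mechanised proof this is discharged definitionally by the computation rules of $\mathit{All}$, so the two laws each reduce to two very small equalities, one of which is trivial and the other an instance of the corresponding law of $f \lhd f'$ and $g \lhd g'$.
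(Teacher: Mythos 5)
Your proposal is correct and follows essentially the same route as the paper: the morphism action is $map\ f$ on the forward part and the case-split backward map (the paper's $mapAll\ f'$), and both functor laws are verified exactly as in the paper's appendix, by delegating the forward equations to the \texttt{Maybe} functor in $\Type$ and discharging the backward equations by case analysis on the \texttt{Maybe} index, with the $\mathit{Nothing}$ branch trivial and the $\mathit{Just}$ branch an instance of the law for $f \lhd f'$ and $g \lhd g'$.
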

    \begin{proof}
    Using definition~\ref{def:maybe-all} as the map on objects, we define
the map on morphisms:
    $(f\lhd f') \mapsto (map\ f \lhd mapAll\ f')$
Where $map$ is the standard map from the $Maybe$ type and $mapAll$ is
defined as:
    \begin{align*}
        &mapAll : (\forall x. p(f\ x) \to q\ x) \to  (x : Maybe\ a) \to All_p\ (map\ f\ x) \to All_q\ x\\
        &mapAll\ f'\ (Just\ x)\ v = f'\ v\\
        &mapAll\ f'\ Nothing\ () = ()
    \end{align*}
        The proof is given in appendix~\ref{sec:maybe-functor-proof}
\end{proof}
\begin{prop}[{\flink{line165}}]\label{def:maybe-monad}MaybeAll is a monad in $\Cont$.
\end{prop}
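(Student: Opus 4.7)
The plan is to exhibit unit and multiplication natural transformations lifting the standard Maybe monad on $\Type$, then reduce each monad law to the corresponding law on $\Type$ together with a routine case analysis on the $\All$ family.

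For the unit, at each container $(q \rhd r)$ I would define $\eta_{(q\rhd r)} : (q \rhd r) \Rightarrow \MaybeC(q \rhd r)$ by taking the forward part to be $\text{Just} : q \to \textit{Maybe}\ q$, and the backward part to be the identity: given $x : q$, we have $\All_r\ (\text{Just}\ x) = r\ x$ by definition, so the required map $\All_r\ (\text{Just}\ x) \to r\ x$ is just $\text{id}$. For the multiplication, at each $(q \rhd r)$ I would define $\mu_{(q\rhd r)} : \MaybeC(\MaybeC(q \rhd r)) \Rightarrow \MaybeC(q \rhd r)$. The forward part is the standard $\text{join} : \textit{Maybe}(\textit{Maybe}\ q) \to \textit{Maybe}\ q$. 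For the backward part, given $x : \textit{Maybe}(\textit{Maybe}\ q)$ I need $\All_r\ (\text{join}\ x) \to \All_{\All_r}\ x$. A case split on $x$ shows both sides agree definitionally: if $x = \text{Nothing}$ or $x = \text{Just Nothing}$, both sides reduce to $\top$; if $x = \text{Just (Just}\ y)$, both reduce to $r\ y$. So the backward part is again the identity in each case.

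Next I would verify naturality of $\eta$ and $\mu$ with respect to an arbitrary container morphism $f \lhd f'$. The forward components are naturality of $\text{Just}$ and $\text{join}$ for the Maybe monad on $\Type$, which are standard. The backward components reduce, by case analysis on the Maybe argument, to checking that the identity maps constructed above commute with the $\textit{mapAll}$ action defined in the previous proposition; in each case this is definitional.

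Finally, the three monad laws (left unit, right unit, associativity) decompose into a forward law in $\Type$ and a backward law in $\Type^{\mathrm{op}}$. The forward laws are exactly the Maybe monad laws on $\Type$. The backward laws, by the container composition formula of Definition~\ref{def:cont-comp}, amount to comparing chains of identity-like maps between $\All$-families; a case analysis on the Maybe arguments again collapses every branch to a definitional equality. The main obstacle I expect is purely bureaucratic: keeping the $\All$ indices aligned through the chain of substitutions in Definition~\ref{def:cont-comp}, so that the backward equalities typecheck. Once the equalities $\All_r\ (\text{Just}\ x) = r\ x$ and $\All_{\All_r}\ (\text{Just}\ (\text{Just}\ y)) = r\ y$ are used to rewrite the goals, everything reduces to reflexivity, which is precisely what the mechanised proof at the linked location should be doing.
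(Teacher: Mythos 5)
Your proposal matches the paper's proof: the unit is $(\text{Just} \lhd id)$, the multiplication is $(\text{join} \lhd id)$, the forward monad laws are inherited from the $\textit{Maybe}$ monad on $\Type$, and the backward parts become identities only after pattern matching on the $\textit{Maybe}$ index --- exactly the observation the paper makes. Your additional spelling-out of naturality and of the index case analysis is more detail than the paper gives, but it is the same argument.
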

\begin{proof}
    We inherit the tools we need from the fact that $Maybe$ is a monad in $\Type$,
    with the $unit$ morphism given by $(Just \lhd id)$ and the $join$ by $(join \lhd id)$.
    By relying on the underlying $Maybe$ monad we ensure $unit$ behaves
    as an identity for $join$ and that $join$ is associative. The identity in the backward
    map is one thanks to pattern matching on the index.
\end{proof}

In the category of types and functions, we have that $Maybe\ x \approx 1 + x$
Naturally, one might ask whether this fact is also true in the category of containers.
To check this, we first
need to define the coproduct on containers.

\begin{definition}[{\flink{line176}}]\label{def:coproduct}\\
    \noindent
    \begin{minipage}[t]{.60\linewidth}
    Given two containers $(a \rhd a')$ and $(b\rhd b')$, the coproduct
    is given by $\cont{x}{a + b}{choice}$.
    \end{minipage}\hfill
    \begin{minipage}[t]{.35\linewidth}
        \vspace{-3.5em}
\begin{align*}
    &\forall a', b'. \\
    &choice\ (inl\ x) \mapsto a'\ x\\
    &choice\ (inr\ x) \mapsto b'\ x
\end{align*}
    \end{minipage}
\end{definition}
\begin{intu}The coproduct of two containers can be understood as building an
API that accepts requests from either \texttt{a} or \texttt{b} and returns
the corresponding answer. The $choice$ eliminator computes the return type
depending on what request was sent. If the request was of type $a$, then the
response should be of type $a'$, and similarly with $b$.
If the response was not dependent in this way we could be receiving responses
of type $b'$ after having sent a request of type $a$. We want to avoid that.
\end{intu}

Equipped with the coproduct, we can ask if there is an equivalent to the
isomorphism $Maybe\ x \approx 1 + x$ in $\Type$. Indeed, there is if
we define $1$ in containers by $\toptop$:

\begin{prop}[{\flink{line230}}]\label{prop:maybe-coprod}$MaybeAll\ x$ is Isomorphic to $1 + x$
\end{prop}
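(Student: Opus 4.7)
The plan is to exhibit container morphisms in both directions and show that their compositions reduce to identity by case analysis. The key observation is that both $MaybeAll\ (q \rhd r)$ and $\toptop + (q \rhd r)$ have two-case request types (either a ``no real query'' case or a ``query $x$ of type $q$'' case) and that in the ``no query'' case both responses are $\top$, while in the ``query $x$'' case both responses are $r\ x$. So the isomorphism should be definable by simply pattern matching and taking identities pointwise.

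Concretely, I would first define the forward morphism $\varphi \lhd \varphi' : MaybeAll\ (q\rhd r) \Rightarrow \toptop + (q\rhd r)$ with $\varphi\ \text{Nothing} = inl\ ()$ and $\varphi\ (\text{Just}\ x) = inr\ x$. The backward part $\varphi'$ must have type $\forall (m : Maybe\ q).\ choice(\varphi\ m) \to All_r\ m$; on $\text{Nothing}$ both sides compute to $\top$ and on $\text{Just}\ x$ both compute to $r\ x$, so $\varphi'$ is the identity in each case. I would then define the inverse $\psi \lhd \psi' : \toptop + (q\rhd r) \Rightarrow MaybeAll\ (q\rhd r)$ by $\psi(inl\ ()) = \text{Nothing}$ and $\psi(inr\ x) = \text{Just}\ x$, with $\psi'$ again the identity on each branch after reduction.

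The two round-trips are verified by pattern matching: $\varphi \circ \psi$ and $\psi \circ \varphi$ are the identity on requests in each of the two cases, and the backward components are compositions of identities, hence identity. By Definition~\ref{def:cont-comp} and Proposition~\ref{prop:cont-cat} this gives two container morphisms that compose to the identity morphism in $\Cont$ on each side.

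I do not expect any serious obstacle: the whole argument is driven by the defining equations of $All$ and $choice$, which make the relevant response types propositionally (in fact definitionally) equal on each branch. The only mildly subtle point is checking that the definitional reductions of $All_r\ (\text{Just}\ x) = r\ x$ and $choice(inr\ x) = r\ x$ line up so that the backward components really are typeable as identities; this is immediate from Definitions~\ref{def:maybe-all} and~\ref{def:coproduct}. Everything else is routine case analysis, exactly as in the proof of Proposition~\ref{def:maybe-monad}.
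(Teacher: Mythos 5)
Your proposal is correct and follows essentially the same route as the paper's proof: the forward part is the standard isomorphism $Maybe\ x \cong \top + x$ on requests, and the backward part is the identity on each branch after pattern matching on the index, since $All$ and $choice$ compute to the same response type ($\top$ or $r\ x$) in the corresponding cases. You merely spell out the two morphisms and the round-trip checks more explicitly than the paper does, which is a harmless elaboration rather than a different argument.
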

\begin{proof}
    We need two isomorphisms, one for the forward part, and one for the backward part, the forward
    isomorphism is given by the known functional programming result that
    $Maybe\ x \cong Either\ \top\ x$. For the backward part, we need to prove the isomorphism
    between $\forall (m : Maybe\ x). Any\ x'\ m$ and $\forall (e : \top + x). choice\ \bot\ x'\ e$.
    For the backward part, we need to pattern match on the index to find out how the type changes.
    In the $Just\ v$/$inr\ v$ case, the type is $x'\ v$ in both cases, when the index is
    $Nothing$/$inl\ ()$, the type is $\top$ in both cases.
\end{proof}

This indicates that our previous intuition
about types and functions does translate to containers and their morphisms. Including
features such as the diagonal map, which collapses two identical choices into one.

\begin{definition}[\flink{line236}]\label{def:dia}The diagonal map $a + a \Rightarrow a$ is given by the morphism
    $dia \lhd id$ where $dia$ is the diagonal on the coproduct in $\Type$.
\end{definition}

As usual, the identity in the backward part depends on the value of the index.
This representation makes the following map easy to write:
\begin{definition}[\flink{line241}] $MaybeU : MaybeAll\ 1 \Rightarrow 1$ removes redundant $MaybeAll$
\end{definition}

And is defined by composing the diagonal from definition~\ref{def:dia} with the coproduct isomorphism from
proposition~\ref{prop:maybe-coprod}.

Given a monad $M$, we can use Kleisli composition
$\kleisli : \forall a, b, c. (a \to M\ b) \to (b \to M\ c) \to (a \to M\ c)$
as a combinator to sequence multiple monadic programs. The following
example shows the use of Kleisli composition with the $MaybeAll$ monad
to sequence two potentially failing morphisms.

\begin{equation*}
    \begin{split}
        & parse : \strstr \Rightarrow MaybeAll\ \mathit{HTTP}\\
        & parse = \ldots\\[0.3em]
        & router : HTTP \Rightarrow MaybeAll\ \mathit{Internal}\\
        & router = \ldots\\[0.3em]
        & system : \strstr \Rightarrow MaybeAll\ \mathit{Internal}\\
        & system = parse \kleisli router
    \end{split}
\end{equation*}

\subsection{Stateful Protocols via the Sequential Product}\label{composition-as-sessions}

Some APIs expect the client to send multiple requests in a given order.
Such APIs are sometimes referred to as \say{stateful protocols}, and an instance
of successfully engaging in one is a \say{session}.

The sequential product, also known as the \emph{substitution product}, describes an API that expects
two queries, one after the other, and gives
responses for each of them. We could imagine sending a pair, but if we did that,
we could not choose what the second query should be depending on the first.
Capturing this dependency is at the core of the sequential product.
I use $\Sigma$-types instead of existential quantification to
match the representation of the program in code. Subscript $\pi1$ and
$\pi2$ are the projection functions from the $\Sigma$-type.

\begin{definition}[{\flink{line247}}]\label{def:ex-seq-prod}
    Given two containers $(a \rhd a')$ and $(b \rhd b')$, their
    sequential product $\eseq$
    is given by
        $(x : \Sigma (y : a) . a'\ y \to b \rhd
        \Sigma (z : a'\ x_{\pi1}) . b'\ (x_{\pi2}\ z))$ .
\end{definition}

\begin{intu} Sequencing two APIs should create one that will expect two
inputs in sequence and will provide both results at once. In the query
part, we see that we expect a query of type $a$,
but we also require a \emph{continuation} of type $a'\ y \to b$ that tells the system
the follow-up query given a response to $a$.
In the response, we send both results of each query, but because the second one
depends on the first, we return a $\Sigma$-type with the second response adjusted using the
continuation in the query.
\end{intu}

To effectively use sequential composition I introduce two maps to eliminate
its neutral element $1 = \toptop$, a container that will play a crucial role
in section~\ref{def:costate}.

\begin{definition}[\flink{line251}]
    $1$ is the neutral element for $\eseq$ given by the maps\\
\begin{minipage}{.45\linewidth}
\begin{equation*}
\begin{split}
    &UnitL : 1 \eseq x \Rightarrow x          \\
    &UnitL = (\pi 2\lhd \lambda x, y. ((), y))
\end{split}
\end{equation*}
\end{minipage}
\hfill
\begin{minipage}{.45\linewidth}
\begin{equation*}
\begin{split}
    &UnitR : x \eseq 1 \Rightarrow x\\
    &UnitR = (\pi 1\lhd \lambda x, y. (y, ()))
\end{split}
\end{equation*}
\end{minipage}
\end{definition}
\vspace{-0.4em}

For the proof that it is a monoidal product, refer to
Spivak~\cite{spivakReferenceCategoricalStructures2022}. I will only use this
as a combinator to build programs.

An example of a stateful protocol is a file system with a file handle that
needs to be accessed at first
and closed at the end of the interaction. If we think of \texttt{Open},
\texttt{Close}, and \texttt{Write} as standalone APIs to a filesystem,
then a successful interaction with the system is a sequence of
\texttt{Open} followed by \texttt{Write} and ended with \texttt{Close}:
$\texttt{Open} \eseq \texttt{Write} \eseq \texttt{Close}$.
We could even define a $Read$ container and write
$\texttt{Open} \eseq (\texttt{Write} + \texttt{Read}) \eseq \texttt{Close}$
if we wanted to indicate that the API allows both reads and writes
by using the coproduct~\ref{def:coproduct} on containers.

\subsection{Kleene Star for Repeated Requests}\label{sec:Kleene}

We can represent a complete interaction with the filesystem by the
composition of the three APIs $\texttt{Open} \eseq \texttt{Write} \eseq \texttt{Close}$
in a way that is impossible to mis-use the underlying protocol.
One odd
thing about this approach is that, with this API, we are
only allowed to do one thing: Write the file exactly once. If we want to
write to it multiple times, or if we want to close the file handle immediately after
opening it, we cannot, because it would be a breach of the API. What we really want
is to write something like this: $\texttt{Open} \eseq \texttt{Write}* \eseq \texttt{Close}$
where the postfix \texttt{*} is the Kleene star.
The Kleene star operator on containers is a repeated version of the sequential
product~\ref{def:ex-seq-prod} and mimics the way it works in regular expressions,
indicating 0 or more occurrences of a term strung together sequentially.

\begin{definition}[{\flink{line274}}]\label{def:kleene}
    Given a container $c = (a \rhd a')$ the Kleene star with type $\_* : Container \to Container$ is given using
    a data structure $StarShp : Container \to \Type$ and a map
    $StarPos : \forall c. StarShp_c \to \Type$ as $\cont{x}{StarShp_c}{StarPos_c}$
\begin{equation*}
\begin{split}
    & \texttt{data}\ StarShp : Container \to Type\ \texttt{where} \\
    & \quad Done : StarShp_{a\rhd a'}\\
    & \quad More : (x : a) \to (a'\ x \to StarShp_{a\rhd a'}) \to StarShp_{a\rhd a'}\\[0.5em]
    & StarPos : \forall (a\rhd a' : Container) . StarShp_{a\rhd a'} \to Type\\
    & StarPos\ Done = \top\\
    & StarPos\ (More\ req\ cont) = \Sigma (x : a'\ req) . StarPos_{a\rhd a'}(cont\ x)
\end{split}
\end{equation*}
\end{definition}

The Kleene star gives the choice of sending 0 API calls using
$Done$ or more requests by using $More$ to add to the list of requests to send.
We can build a smart constructor
$single : \forall (a \rhd a' : Container). a \to StarShp_{a \rhd a'}$ that will wrap one
layer of $More$ around a value of type $a$ and end the sequence with $Done$ effectively
sending a single request:
$single\ x \mapsto More\ x\ Done$.

\begin{prop}[{\flink{line343}}]
Kleene star is a functor.\end{prop}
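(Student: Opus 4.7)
The plan is to extend the object-level construction of Definition~\ref{def:kleene} with an action on morphisms, then verify the two functor laws by induction on the shape data type $StarShp$. Given $(f \lhd f') : (a \rhd a') \Rightarrow (b \rhd b')$, I would first define a forward map $\mathrm{mapShp} : StarShp_{a \rhd a'} \to StarShp_{b \rhd b'}$ by recursion: $Done \mapsto Done$, and $More\ x\ k \mapsto More\ (f\ x)\ (\lambda r.\ \mathrm{mapShp}\ (k\ (f'\ r)))$. The crucial point is that the continuation $k$ expects a response in $a'\ x$, so I must use $f'$ to transport the incoming $b'\ (f\ x)$ response back into the domain before invoking $k$; this is exactly the kind of contravariant bookkeeping that makes container morphisms into lens-like data.

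Next I would define the backward map $\mathrm{mapPos} : \forall s.\ StarPos_{b \rhd b'}(\mathrm{mapShp}\ s) \to StarPos_{a \rhd a'}\ s$ by induction on $s$. In the $Done$ case both sides are $\top$ and we return $()$. In the $More\ x\ k$ case the incoming position unfolds to $\Sigma(y : b'\ (f\ x)).\ StarPos_{b \rhd b'}(\mathrm{mapShp}\ (k\ (f'\ y)))$, which I convert to $\Sigma(z : a'\ x).\ StarPos_{a \rhd a'}(k\ z)$ by taking $z := f'\ y$ and recursively applying $\mathrm{mapPos}$ to the second projection (the recursive call is on the strictly smaller shape $k\ (f'\ y)$, so termination is immediate). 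The functorial action is then $(\mathrm{mapShp} \lhd \mathrm{mapPos})$.

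For the functor laws, I would proceed by induction on the shape in each case. Identity preservation $(id \lhd id)^* = (id \lhd id)$ requires showing $\mathrm{mapShp}_{id}\ s = s$ and $\mathrm{mapPos}_{id}\ s\ p = p$; both split on $s$, with the $More$ case folding in the inductive hypothesis applied under the continuation. Functoriality of composition $((g \lhd g') ; (f \lhd f'))^* = (g \lhd g')^* ; (f \lhd f')^*$ similarly unfolds the definition of morphism composition from Definition~\ref{def:cont-comp} and reduces to the inductive hypothesis on the subshape $k\ (f'\ (g'\ r))$.

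The main obstacle will be the $More$ case of the identity law, where one needs $More\ x\ (\lambda r.\ \mathrm{mapShp}_{id}\ (k\ r)) = More\ x\ k$: this amounts to showing that two functions with pointwise-equal outputs are equal, which in the Idris2 formalisation typically requires an appeal to functional extensionality (or, equivalently, stating the laws up to pointwise propositional equality of continuations). The same subtlety arises in the composition law when comparing the nested continuations produced on either side. Modulo this standard extensionality concern, every case is a direct unfolding of definitions, so the bulk of the proof is routine structural induction with the forward/backward split already determined by the categorical structure.
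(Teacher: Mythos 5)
Your construction coincides with the paper's: the same $\mathrm{mapShp}$/$\mathrm{mapPos}$ action on morphisms (with $f'$ precomposed into the continuation in the forward direction and the pair $(f'\ y,\ \text{recursive call})$ in the backward direction), followed by the same four identity/composition lemmas proved by structural induction on the shape. Your remark about needing functional extensionality (or pointwise equality) in the $More$ case is a genuine subtlety that the paper's appendix passes over silently, but it does not change the approach.
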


\begin{proof}
    The map on objects is given by $\_* : Container \to Container$. The map on morphisms is given by:
\[
\begin{aligned}
& map_{*} : a \Rightarrow b \to a* \Rightarrow b*\\
& map_{*}\ m = \cont{x}{mapShp_m}{mapPos_m}
\end{aligned}
\]
Where, for all pairs of morphisms $a, b$, given a morphism $(f \lhd f') : a \Rightarrow b$
\[
\begin{aligned}
& mapShp : StarShp_a \to StarShp_b\\
& mapShp\ Done = Done\\
& mapShp\ (More\ x_1\ x_2) =
    More (f\ x_1) (mapShp \circ x_2 \circ f'_{x_1})
\end{aligned}
\]
\[
\begin{aligned}
& mapPos : \forall (x : StarShp_a) . StarPos_b (mapShp\ x) \to StarPos_a\\
& mapPos\ (x = Done)\ \top = \top\\
& mapPos\ (x = More\ x_1\ x_2)\ (y_1, y_2) =
    (f'_{x_1}\ y_1, mapPos_{x_2 (f'_{x_1}\ y_1)} y_2)
\end{aligned}
\]
To prove the functor preserves identities and composition we need to prove that both
$mapShp$ and $mapPos$ preserves identities and composition. That is, the four equations hold:

\begin{enumerate}
\item $mapShp_{id} = id$
\item $mapPos_{id} = id$
\item $mapShp_{f ; g} = mapShp_f ; mapShp_g$
\item $mapPos_{f ; g} = mapPos_f ; mapPos_g$
\end{enumerate}
    I prove each of these lemmas in appendix~\ref{apx:functor}.
\end{proof}

We have almost all the pieces to build large-scale systems
compositionally, but how do
we run them? That is the topic of the next section.

\subsection{Clients as State, Servers as Costate}

A container morphism is not a complete program, it is a tool to compose
different systems with matching APIs at their boundaries, but their definition
as pairs of maps does not explain how to turn them into executable programs.
In particular they are not \emph{closed}, their left and right boundary are
still open to further composition.

To close them, I borrow a technique that I learned from Jules Hedges in Open Games:
Contexts, State, and Costate. The terminology itself is adapted from quantum
computing~\cite{coeckePicturingQuantumProcesses2017}.

\begin{definition}[{\flink{line354}}]\label{def:state}
    A State-Lens is a morphism from $1$ into a container $(a \rhd a')$.
    It is isomorphic to a value of type $a$. $State\ c \triangleq 1 \Rightarrow c$\\
\begin{minipage}{.5\linewidth}
\begin{align*}
&state : a \to 1 \Rightarrow (a \rhd a')\\
&state\ x = (\lambda \_. x \lhd \lambda \_ \_ . \top)
\end{align*}
\end{minipage}
\begin{minipage}{.5\linewidth}
\begin{align*}
&value : 1 \Rightarrow (a \rhd a') \to a \\
&value\ (f \lhd \_) = f\ \top
\end{align*}
\end{minipage}
\end{definition}

\begin{definition}[{\flink{line368}}]\label{def:costate}A Costate-Lens is a morphism from
    $(a \rhd a')$ into $1$ and is isomorphic to a function $(x : a) \to a' x$. $Costate\ c \triangleq c \Rightarrow 1$
\begin{align*}
    &costate: ((x: a) \to a'\ x) \to (a \rhd a') \Rightarrow 1\\
    &costate\ f = (\lambda\_. \top, \lambda x\ \_ . f\ x)\\[0.3em]
    &exec : (a \rhd a') \Rightarrow 1 \to (x : a) \to a'\ x\\
    &exec\ (\_ \lhd f)\ x = f\ x\ \top
\end{align*}
\end{definition}
\begin{definition}[{\flink{line381}}] A Context for a lens
    $(a \rhd a') \Rightarrow (b \rhd b')$ is
    a pair $a \times ((x : b) \to b' x)$, or equivalently, a state for $(a \rhd a')$ and
    a costate for $(b \rhd b')$.
\end{definition}
\begin{intu}
    The original intuition for containers as APIs is that a container $(a \rhd a')$ gives us
    the type signature of a system that takes requests of $a$ and return responses $a'$.
    The definition of the Costate-Lens materialises this idea by turning it into a morphism.
    Similarly, for requests, if an API $(a \rhd a')$ accepts requests of type $a$, it
    should be that communicating with it requires producing values of type $a$.
\end{intu}

Once taken together, state and costate give us a way to \emph{run} morphisms in the same way
they are run in Open Games~\cite{boltBayesianOpenGames2019}.

\begin{equation*}\label{eq:runner}
    \begin{split}
        &run : (st : State (a \rhd a')) \to Costate\ (b \rhd b') \to (a \rhd a') \Rightarrow b \to a' (value\ st)\\
        &run\ st\ co\ m = exec (m ; co) (value\ st)
    \end{split}
\end{equation*}

\subsection{IO and Side-Effects}\label{state-and-side-effects}

We are now able to use our API transformers to extract programs using
a state, costate, and contexts.
But something is still missing. In everything I showed until now, I only ever described pure
functions,
but a tool like a database library will
not expose a pure function as its API. Rather, it will perform side effects, such as
\IO\ or exceptions. How to model effectful programs is the topic of this section.

Assuming a container for a database with API $(q : \texttt{DBQuery} \rhd \texttt{DBRes}\ q)$,
it is unlikely a library implementing it would be a pure function. Rather, it would be an
effectful one working in \IO\ with type $(q : \texttt{DBQuery}) \to \IO\ (\texttt{DBRes}\ q)$.

If we were to model this stateful function as a container it would be the new container
$\texttt{DBIO} = (q : \texttt{DBQuery} \rhd \IO\ (\texttt{DBRes}\ q))$.

As with the Maybe monad on containers, we can define a map on containers, lifting from monads on Type
to comonads on containers. Let us first see how to lift functors on Type to
functors on containers:

\begin{definition}[{\flink{line393}}]\label{def:lift}Given a functor $f : \Type \to \Type$, we define
\texttt{Lift}, the map on containers  from $(q \rhd r)$ to $(q \rhd f\circ r)$.
\end{definition}

\begin{prop}[{\flink{line416}}]
    Given a functor $f : \Type \to \Type$, $\texttt{Lift}\ f$ is an endofunctor in $\Cont$.
    The map on objects is given by $\texttt{Lift}\ f$, the map on morphisms is given by:
            $(g \lhd g') \mapsto g \lhd map\ f \circ g'$
    \label{f-f-functor}
\end{prop}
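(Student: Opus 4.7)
The plan is to verify the two functor laws (preservation of identity and preservation of composition) by direct calculation, reducing each to the corresponding law for the underlying functor $f$ on $\Type$. First, though, I would briefly check that the proposed map on morphisms is well-typed: given $(g \lhd g') : (q_1 \rhd r_1) \Rightarrow (q_2 \rhd r_2)$, the forward part $g : q_1 \to q_2$ is unchanged, and the backward part needs to have type $\forall x.\ (f \circ r_2)(g(x)) \to (f \circ r_1)(x)$, which is exactly what $map_f \circ g'$ provides by applying the action of $f$ pointwise to each $g'_x : r_2(g(x)) \to r_1(x)$.

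For preservation of identity, I would compute the image of $(id \lhd id)$ on $(q \rhd r)$, which unfolds to $(id \lhd map_f \circ id)$. Since the backward component takes values in $\forall x.\ f(r(x)) \to f(r(x))$ and $map_f$ applied to the identity is the identity by functoriality of $f$, this equals $(id \lhd id)$, the identity on $(q \rhd f \circ r)$ as required.

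For preservation of composition, take $(g_1 \lhd g'_1) : a \Rightarrow b$ and $(g_2 \lhd g'_2) : b \Rightarrow c$. By Definition~\ref{def:cont-comp}, their composition has forward part $g_2 \circ g_1$ and backward part $\lambda x.\ g'_1(x) \circ g'_2(g_1(x))$. Applying $\texttt{Lift}\ f$ yields a morphism whose backward part is $\lambda x.\ map_f\bigl(g'_1(x) \circ g'_2(g_1(x))\bigr)$. On the other hand, composing the lifts of the two morphisms gives a backward part $\lambda x.\ map_f(g'_1(x)) \circ map_f(g'_2(g_1(x)))$. These agree by the composition law for $f$, namely $map_f(h \circ k) = map_f(h) \circ map_f(k)$. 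The forward parts trivially agree since lifting leaves them untouched.

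There is no real obstacle here: the entire content of the proof is that the backward component is a pointwise application of $map_f$, so both functor laws for $\texttt{Lift}\ f$ on $\Cont$ reduce immediately to the corresponding laws for $f$ on $\Type$. The only mild care needed is to track the indexing in the backward component (the family is indexed by $x : q_1$, and the identity/composition equalities must hold for each such $x$), which a pointwise argument handles cleanly. I would defer the full equational derivation to an appendix in the same style as the $MaybeAll$ functor proof.
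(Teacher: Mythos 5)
Your proposal is correct and follows essentially the same route as the paper: both reduce the functor laws for $\texttt{Lift}\ f$ to the functoriality of $f$, with your explicit pointwise tracking of the backward component playing the role of the paper's brief appeal to the fact that the backward part lives in $\Type^{op}$ and duality preserves the functor laws. Your version just spells out the equational details that the paper leaves implicit.
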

\begin{proof}
    The proof follows from the fact that $f$ is a functor, with the subtlety that the backward
    part of the morphism lives in $\Type^{op}$, and that duality also preserves the functor laws.
\end{proof}

This construction matches the one given by Abou-Saleh et al. on
\emph{monadic lenses}~\cite{abou-salehReflectionsMonadicLenses2016a}. Seeing it as a
map on containers sheds light on an interesting
fact about monadic programming in the category of lenses: Given a monad in $Set$, we obtain a
comonad in $Cont$.

\begin{prop}[{\flink{line464}}]
    Given a monad in $\Type$ $(T : \Type \to \Type, unit : \forall a . a \to T\ a,
    mult: \forall a. T\ (T\ a) \to T\ a)$ with the appropriate laws, we build the endofunctor
    $\texttt{Lift}\ T: \Cont \to \Cont$ like above~\ref{f-f-functor}.
    The counit is given by the morphism $(id \lhd unit)$ and the comultiplication
    by $(id \lhd mult)$.
\end{prop}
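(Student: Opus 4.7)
My plan is to exhibit $\texttt{Lift}\ T$ as a comonad by dualising the monad structure of $T$ through the backward component of container morphisms, leveraging the fact that in every morphism in sight the forward component is the identity.

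First I would check well-typedness. At a container $c = (q \rhd r)$, the proposed counit $\epsilon_c = (id \lhd unit)$ is claimed to go $\texttt{Lift}\ T\ c = (q \rhd T \circ r) \Rightarrow (q \rhd r)$: the forward $id : q \to q$ is immediate, and the backward family at each $x : q$ must have type $r(x) \to T(r(x))$, which is exactly $unit_{r(x)}$. Similarly $\delta_c = (id \lhd mult)$ is well-typed as $\texttt{Lift}\ T\ c \Rightarrow \texttt{Lift}\ T(\texttt{Lift}\ T\ c)$ because its backward component at $x$ is $mult_{r(x)} : T(T(r(x))) \to T(r(x))$.

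Next I would establish naturality. For $f = (g \lhd g') : c \Rightarrow d$, the functor action from Proposition~\ref{f-f-functor} gives $\texttt{Lift}\ T(f) = (g \lhd T \circ g')$. The forward components of the naturality squares for $\epsilon$ and $\delta$ collapse to identities composed with $g$ on either side, trivially commuting. Unpacking the backward components via Definition~\ref{def:cont-comp} reduces each square at index $x$ to the naturality of $unit$ and $mult$ as natural transformations of the monad $T$ in $\Type$, which is part of the monad hypothesis.

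Finally, the three comonad laws. Each is an equation of container morphisms sharing domain and codomain; the forward components are compositions of identities and so are trivially equal. The backward components, computed pointwise at $x : q$ and traced through the composition rule, become equations in $\Type$ on nested $T$-stacks over $r(x)$:
\begin{align*}
    mult_{r(x)} \circ T(unit_{r(x)}) &= id, \\
    mult_{r(x)} \circ unit_{T(r(x))} &= id, \\
    mult_{r(x)} \circ T(mult_{r(x)}) &= mult_{r(x)} \circ mult_{T(r(x))},
\end{align*}
which are precisely the monad laws for $T$ instantiated at $r(x)$.

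The main subtlety I expect is bookkeeping rather than mathematical difficulty: because the backward part of a composition reverses order (as it lives in $\Type^{op}$), each comonad diagram unfolds into the corresponding monad diagram with arrows already in the correct direction, so the verification is really about checking that names line up at each index $x$. With the forward part always being $id$, no non-trivial coherence beyond the monad laws of $T$ is ever invoked.
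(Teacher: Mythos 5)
Your proposal is correct and follows essentially the same route as the paper: the paper's proof is a one-line appeal to the fact that the backward components live in $\Type^{op}$, where a monad dualises to a comonad, and your argument is simply the explicit unpacking of that observation (well-typedness of $unit$ and $mult$ at each index, trivial forward components, and the three comonad laws reducing pointwise to the monad laws of $T$). No discrepancy to report.
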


\begin{proof}
    Because only the backwards parts see any changes from $Lift\ T$, and it runs backward,
    the proof that $counit$ and $comult$ form a comonad follows from the fact that the dual
    of a monad is a comonad in the opposite category.
\end{proof}

Using the fact that $Lift\ f$  is a functor, we can map a morphism
$HTTP \Rightarrow DB$, and into an effectful
program $Lift\ \IO\ HTTP\ \Rightarrow Lift\ \IO\ DB$ to interact with other effectul APIs.
The counit gives us a way to interface
a stateful API with a pure one. For example, a test database that runs in memory could
have a pure signature, but the rest of the program will still expect to run with an \IO
\ effect, we can build the API transformer $Lift\ \IO\ DB \Rightarrow DB$ to plug
our program into a pure implementation of a database for testing purposes.

Finally, one handy tool is that $Lift\ f$ distributes around $+$, allowing us to
combine effectful computation with ones that provide a choice of API.
\begin{definition}[\flink{line475}]\label{def:distrib-plus}Given $f : \Type \to \Type$ an endofunctor,
    $distrib_+: Lift\ f\ (a + b) \Rightarrow Lift\ f\ a + Lift\ f\ b$ is given by
    the morphism $id \lhd id$.
\end{definition}

The backward part is a bit tricky because it is only an identity \emph{after}
we pattern match on its index. Because of the relationship between coproducts
and the $MaybeAll$ map, we define a similar distributivity map for it around
$Lift$:

\begin{definition}[\flink{line480}]\label{def:distrib-maybe}Given $f : \Type \to \Type$ an endofunctor,
    $distrib_{Maybe}: Lift\ f\ (MaybeAll\ a) \Rightarrow MaybeAll\ (Lift\ f\ a)$ is given by
    the morphism $id \lhd id$.
\end{definition}

Again the backward part is only an identity \emph{after} pattern matching.

Our last combinator allows to combine multiple costates sequentially using
monadic sequencing.

\begin{definition}[\flink{line486}]\label{def:seq-monad}For a given monad $m : \Type \to \Type$,
    and two containers $a, b : \Cont$,
    we can run two effectful costates
    $m1 : Costate\ (Lift\ m\ a)$ and $m2 : Costate\ (Lift\ m\ b)$ in sequence
    to obtain a combined effectful API $Costate (Lift\ m\ (a \eseq b))$.
\end{definition}

The implementation is straightforward in its mechanised version.

\section{Demo - Executing Interactive programs}\label{sec:demo}

It remains to demonstrate that this model is not only useful for abstractly talking about
properties of larger compositional systems, but also for their implementation. To this end,
I present two examples. First, a To-do app that communicates with a database, to show that the
system is successfully compositional across multiple domains. Second, I implement
the interface of a filesystem to show how stateful protocols can be described and run.

\subsection{A Basic To-do App}\label{sec:demo-todo}

A to-do app is a tool to manage lists of actionable items. Operations include creating a
new item, marking an item as done and retrieving the list of all items. For space
reasons, I will only show
small relevant parts of the original source code. For the full source code, you can find it here:
\ifthenelse{\boolean{anon}}{
    \textbf{redacted for review, open SQL.html instead}
}{
    \href{https://andrevidela.com/aplas-code/SQL.html}{https://andrevidela.com/aplas-code/SQL.html}.
}

The first choice to make is the model of interaction. A real app would use an HTTP interface
to expose the functionality of the application. For simplicity, I will use a command-line
interface via a Read-Eval-Print Loop (REPL).

\begin{equation*}
    \texttt{repl} : (\String \to \IO\ \String) \to \IO\ \top
\end{equation*}

This type suggests that a program interacting with the REPL has the API $\strstr$,
I call this container $\CLI$.
The app handles 3 messages : \texttt{Create}, \texttt{MarkDone}, \texttt{ListAll}.
As you can tell by their name, they allow the creation, update, and retrieval of todo items. Here
is a data definition for it:
\noindent
\[
    \texttt{data}\ TodoCommand = Create\ \String\ \mid\ MarkComplete\ \mathit{ID}\ \mid\ RetrieveAll
\]

For each message, we produce the type of responses.
Except for $RetrieveAll$, all messages do not return any value. Instead,
we will rely on effect lifting to perform side-effects for them.
\begin{equation*}
    \begin{split}
        &TodoResponse : TodoCommand \to Type\\
        &TodoResponse\ RetrieveAll = \Table\ TodoItem\\
        &TodoResponse\ \_ = \top
    \end{split}
\end{equation*}

Where $\Table$ is the type of SQL tables returned by SQL queries. These two types form the
internal API of the app: $App = \cont{cmd}{TodoCommand}{TodoResponse}$.

To implement the database, I use SQLite3 binders in Idris. The library has two methods of interactions:
Commands, and queries. They are implemented with the functions
$runCmd : \texttt{Cmd} \to \texttt{IO}\ \top$ and
$runQuery : (q : \texttt{Query}) \to \texttt{IO} (\Table\ q.type)$ respectively.
These functions suggest a costate with an appropriate container. For
space and legibility reasons, I write $\IO$ instead of $Lift\ \IO$:
\noindent
\begin{minipage}{.43\linewidth}
\begin{equation*}
\begin{split}
& SQLCmd : Container\\
& SQLCmd = (\texttt{Cmd} \rhd \top)\\[0.5em]
& execCmd : Costate (\IO\ SQLCmd)\\
& execCmd = costate\ runCmd
\end{split}
\end{equation*}
\end{minipage}
\hfill
\begin{minipage}{.55\linewidth}
\begin{equation*}
\begin{split}
& SQLQuery : Container\\
    & SQLQuery = (q : Query \rhd \Table\ q.type)\\[0.5em]
    & execQry : Costate\ (\IO\ SQLQuery)\\
& execQry = costate\ runQuery\\
\end{split}
\end{equation*}
\end{minipage}\\[0.3em]

Each costate is combined with a coproduct to give the choice of the caller to interact
with the database by supplying either a command or a query.
\begin{equation*}
\begin{split}
& execDB : Costate (\IO\ (SQLCmd + SQLQuery))\\
& execDB = distrib_{+} ; (execCmd + execQuery) ; dia
\end{split}
\end{equation*}
Where $distrib_{+}$ distributes effects across the coproduct~\ref{def:distrib-plus} and
$dia$ is the diagonal operator~\ref{def:dia}.
We have all the \emph{interface} pieces, now we need to implement the morphisms to translate
between one API to the next. This is done via two morphisms, one that I call $parser$ but
performs two operations, parse incoming requests but also print responses for the command
line to show. The second, called $toDB$, converts internal messages into database
commands and queries, and converts backs responses from the database into internal
messages.
\noindent
\begin{minipage}{.4\linewidth}
\begin{equation*}
    \begin{split}
        &parser : \CLI \Rightarrow MaybeAll\ App\\
        &parser = parse \lhd print
    \end{split}
\end{equation*}
\end{minipage}
\hfill
\begin{minipage}{.55\linewidth}
\begin{equation*}
    \begin{split}
        &toDB : App \Rightarrow (DBCmd + SQLQuery)\\
        &toDB = \ldots
    \end{split}
\end{equation*}
\end{minipage}\\[0.3em]

I elide the implementation of $parse$, $print$, and $\mathit{toDB}$ since their details are not relevant
to this example.

It is worth noting that the mapping $toDB$ is pure because, at this stage, we are only
dealing with valid messages. Accordingly, all responses from the database will also
be valid responses. Therefore, there is no need to concern ourselves with errors yet.
It is only once we compose the API transformer with an effectful database implementation
that we start worrying about how to handle potential errors and effects.

With all those pieces, we can build the
entire system by composing  the parser, the message conversion,
and the database execution.

\begin{equation*}
    \begin{split}
        &app : Costate (\IO\ CLI)\\
        &app = map_{IO}(parser) ; distrib_{Maybe}  ; map_{Maybe}(map_{IO}(toDB) ; execDB) ; MaybeU\\
    \end{split}
\end{equation*}

This makes use of composition~\ref{def:cont-comp}, distributivity~\ref{def:distrib-maybe},
the $MaybeAll$ functor on containers~\ref{def:maybe-all}, and the $Lift$ functor~\ref{def:lift}.
The extra piece at the end cleans up the bureaucracy brought on by $MaybeAll$.

The result of this is $app$, an implementation of the entire pipeline from parsing to database
queries, going via internal messages and a bespoke database interaction model.

\subsection{A filesystem Session}\label{sec:demo-fs}
\setlength{\abovedisplayskip}{0.4em} 
\setlength{\belowdisplayskip}{0.4em} 
\setlength{\abovedisplayshortskip}{0.4em} 
\setlength{\belowdisplayshortskip}{0.4em} 
In section~\ref{composition-as-sessions}, I claimed that the sequential product
describes sessions and that their type constrains the sequence of
requests that need to be sent to constitute a valid
session. The example I gave was the API for a filesystem with three
different messages for opening a file handle, one for writing to a
file handle, and one for closing a file handle. I will now elaborate on this
example here with some code snippets to show how it looks in practice
with a real file API.
The full source code can be seen here:\ifthenelse{\boolean{anon}}{
    \textbf{link redacted for review, open fs.html}
}{
    \href{\formalisationURL/fs.html}{\formalisationURL/fs.html}.
}

First, we need functions to perform the IO operations on the filesystem.
I am going to rely on Idris' built-in functionality:

\noindent
\begin{minipage}[t]{0.45\linewidth}
\vspace{-1em}
\begin{align*}
    &\texttt{fPutStrLn} : \texttt{File} \to \texttt{String} \to \IO\ \top\\
    &\texttt{open} : \texttt{String} \to \IO (\texttt{FileError} + \top)\\
    &\texttt{close} : \texttt{File} \to \IO (\texttt{FileError} + \top)
\end{align*}
\end{minipage}
\hfill
\vline
\hfill
\begin{minipage}[t]{0.45\linewidth}
\vspace{-1em}
\begin{align*}
    &\texttt{data}\ \mathit{OpenFile} = \mathit{MkOpen}\ \String\\
    &\texttt{data}\ \mathit{WriteFile} = \mathit{MkWrite}\ \String\ \File\\
    &\texttt{data}\ \mathit{CloseFile} = \mathit{MkClose}\ \File
\end{align*}
\end{minipage}\\[0.3em]

With those, I implement the costate using the $costate$ function~\ref{def:costate} that acts as the
interpreter of the previously defined API.
\begin{align*}
    &\FS : Container \to Container\\
    &\FS = Lift (\IO \circ (\texttt{FileError} + \_))\\
    &writeMany : Costate (\FS\ WriteC*) = costate\ \texttt{fPutStrLn}\\
    &openFile : Costate (\FS\ OpenC) = costate\ \texttt{open}\\
    &closeFile : Costate (\FS\ CloseC ) = costate\ \texttt{close}\\
    &combined : Costate (\FS\ (OpenC \eseq WriteC* \eseq CloseC))\\
    &combined = \mathit{seqM}\ (\mathit{seqM}\ openFile\ writeMany)\ closeFile
\end{align*}

To run the program, we need to supply a valid input defined by
the type of the session.
\begin{align*}
    & writeOnce, writeTwice, writeNone : \mathit{StarShp}\ \mathit{WriteC}\\
    & writeNone = Done\\
    & writeOnce = single (\mathit{MkWrite}\ file\ "hello")\\
    & writeTwice = More (\mathit{MkWrite}\ file\ "hello") (\lambda \_. single\ (\mathit{MkWrite}\ file\ "world"))
\end{align*}

We can embed any sequence of writes into a valid session by surrounding it with
an opening and closing request. We turn it into a $State$ using definition~\ref{def:state}.
\begin{align*}
    &\mathit{mkSession} : \mathit{StarShp}\ WriteC \to \mathit{State\ (OpenC \eseq WriteC* \eseq CloseC)}\\
    &\mathit{mkSession}\ writes = state (\mathit{MkOpen}\ "append", \lambda file . (writes ,\lambda\_ .
                   \mathit{MkClose} file)
\end{align*}

The pair of a state and costate forms a \emph{context} that we can run.
We run it using $run$~\ref{eq:runner}.
\begin{align*}
    & \mathit{main} : \IO\ \top\\
    & \mathit{main} = \mathit{run}\ (\mathit{mkSession}\ \mathit{writeOnce})\ \mathit{combined}
\end{align*}

And as expected, we can run all sorts of sessions that are also valid, like $writeNone$ and $writeTwice$
written above, using the same runner.

\section{Conclusion \& Future work}

As I have shown in the two demos, using containers and their morphisms not only
forms a compelling mathematical basis to study large compositional systems
but also provides the programming infrastructure to build such systems.

From here, I see three different paths of further improvements.
One could be to enhance the practical usability of the framework, build binders
to more libraries, database systems, user interfaces, operating systems etc. Such
that it becomes possible to build a broad range of programs, all the way from the
TCP layer up to the GUI.

Another path could be to drill down on one specific use-case and elucidate problems
that the framework encounters and solutions it provides as a way to further both
the theory and the practice of the tool. Uses cases I am interested
in include compilers, machine learning, and video game development.

Finally, the last axis of development would be to expand the theory with components
from other systems. For example, developing a type theory
for bidirectional systems, incorporating session types, or discussing the para-construction
, all of which would contribute to making the research broader and compatible with other areas
of research, such as cybernetics or categorical deep learning.

\ifthenelse{\boolean{anon}}{}{
    \subsection{\ackname}
    \begin{credits}
    I would like to thank my colleagues from MSP who offered deep and practical insight.
    In particular Bob, Fred, Guillaume, Sean who gave me precious feedback, Jules and
    Zanzi for encouraging me to write those ideas down, Shin-ya for teaching me how
    to think in categories, and the team at NPL for supporting my work both emotionally
    and financially.
    \end{credits}
}
\bibliographystyle{splncs04}
\bibliography{./bibliography.bib}
\appendix
    
\section{MaybeAll is a functor}\label{sec:maybe-functor-proof}

The functor is given by $MaybeAll$ on objects and given a morphism
$(f \lhd f')$ the corresponding lifted morphism is given by
$mapMor (f\lhd f') = (map\ f \lhd mapAll\ f')$

\lemma{$mapMor$ respects identity}

We want to prove that $mapMor (id \lhd id) = (map\ id \lhd mapAll\ id) = (id \lhd id)$.

Because $Maybe$ is a functor we have that $map\ id \equiv id$. For
$mapAll\ id\ v\ x \equiv x$ we need to proceed by case-analysis.

\begin{itemize}
\item When $v = Nothing$ in $mapAll\ id\ Nothing\ () = ()$
\item When $v = Just\ v$ in $mapAll\ id\ (Just\ v)\ w = id\ w = w$
\end{itemize}

\lemma{$mapMor$ respects composition}

Given two morphisms $(f \lhd f') : (a \rhd a') \Rightarrow (b \rhd b')$ and
$(g \lhd g) : (b \rhd b') \Rightarrow (c \rhd c')$ we want to prove that
$mapMor (g \lhd g') \circ (mapMor (f \lhd f')) = mapMor ((g \lhd g') \circ (f \lhd f'))$

By evaluating $(g \lhd g') \circ (f \lhd f') = (g \circ f \lhd \lambda x. f'\ x \circ (g'\ (f\ x)))$
we obtain the term $(map (g \circ f) \lhd mapAll (\lambda x. f'\ x \circ (g'\ (f\ x))))$.\\
By expanding
the definition of $mapMor$ we obtain on the left side
$(map\ g \lhd mapAll\ g') \circ (map\ f \lhd mapAll\ f')$ by definition of container morphism composition we obtain
$(map\ g \circ map\ f \lhd \lambda x. mapAll\ f'\ x \circ mapAll\ g'\ (map\ f\ x))$

Because $Maybe$ preserves identities, we have $map\ g \circ map\ f \equiv map\ (g \circ f)$. It is left to prove
that $\forall x, y . mapAll\ f'\ x (mapAll\ g'\ (map\ f\ x)) y) = (\lambda z. f'\ z \circ (g'\ (f\ z)))\ x\ y$.

We do this by case-analysis on $x$:

\begin{itemize}
\item When $x = Nothing$ then \\$mapAll\ f'\ Nothing\ (mapAll\ g'\ Nothing\ ())$\\$= mapAll\ f'\ Nothing\ ()$\\$ = () = mapAll\ (\lambda z. f'\ z \circ (g'\ (f\ z)))\ Nothing\ () $\\
\item When $x = Just\ v$ then \\
$mapAll\ f'\ (Just\ v) (mapAll\ g'\ (Just\ (f\ v))\ w) $\\$= mapAll\ f'\ (Just\ v) (g'\ (f v)\ w)$\\$= f'\ v\ (g'\ (f\ v)\ w)$\\$ = mapAll\ (\lambda z. f'\ z \circ (g'\ (f\ z))) (Just\ v)\ w$
\end{itemize}

\section{Lemmas for Kleene as Functor}\label{apx:functor}

\begin{lemma}[$mapShp$ preserves identities]\\
$\forall (a : Container). \forall (x : StarShp\ a). mapShp_{id} x = x$.
\end{lemma}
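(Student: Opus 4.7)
The plan is to prove the lemma by structural induction on $x : StarShp\ a$. Since $StarShp$ has only two constructors ($Done$ and $More$), the induction principle gives us two cases, with the branching in $More$ over the response type $a'\ x_1$ naturally supplying an inductive hypothesis for every sub-tree $x_2\ v$.

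First I would handle the base case $x = Done$. Plugging $id \lhd id$ into the definition of $mapShp$ immediately yields $mapShp_{id}\ Done = Done$ by the first defining equation, and this matches $x$ on the nose, so the case closes by $refl$.

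For the inductive case $x = More\ x_1\ x_2$ with $x_1 : a$ and $x_2 : a'\ x_1 \to StarShp_a$, the inductive hypothesis I would use is $\forall (v : a'\ x_1).\ mapShp_{id}\ (x_2\ v) = x_2\ v$. Unfolding the second defining equation of $mapShp$ with $f = id$ and $f' = id$ gives
\[
mapShp_{id}\ (More\ x_1\ x_2) = More\ (id\ x_1)\ (mapShp_{id} \circ x_2 \circ id_{a'\ x_1}) = More\ x_1\ (\lambda v.\ mapShp_{id}\ (x_2\ v)).
\]
By the inductive hypothesis, the body of this lambda is pointwise equal to $x_2\ v$, so one application of function extensionality collapses the lambda back to $x_2$ and yields $More\ x_1\ x_2 = x$.

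The main obstacle I anticipate is precisely that appeal to function extensionality: because $StarShp$ stores its continuation as a function $a'\ x_1 \to StarShp_a$, we cannot conclude a judgmental equality of two continuations from pointwise equality without it. In the Idris formalisation this will either require invoking an extensionality postulate or, more pragmatically, reformulating the lemma as a pointwise equality (e.g.\ parameterising the statement by arguments so that the recursive instance is reached propositionally). Either way, the structural induction is routine and the only genuine content is this bookkeeping; the analogous lemma for $mapPos$ then follows the same pattern, and the two composition lemmas combine these by an extra layer of induction plus substitution of $f \circ g$ for $id$.
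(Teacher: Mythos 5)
Your proof is correct and follows essentially the same route as the paper's: structural induction on $x$, with the $Done$ case closing by computation and the $More$ case reducing to the inductive hypothesis applied under the continuation. You are in fact more careful than the paper, which silently asserts $mapShp_{id} \circ x_2 = x_2$ "by induction" without acknowledging the function-extensionality (or pointwise-reformulation) issue you correctly identify.
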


\begin{proof}
We perform a proof by induction on $x$.\\
\begin{itemize}
\item In the base case where $x = Done$ we have $map_{id} Done = Done$, which completes the proof.
\item In the inductive case where $x = More\ x_1\ x_2$ then $map_{id} (More\ x_1\ x_2)$ evaluates to
$More\ x_1 (\lambda v. mapShp_{id} (x_2\ v))$. The first argument is the same, the second is
equal by induction $mapShp_{id} \circ x_2 = x_2$.
\end{itemize}
\end{proof}

\begin{lemma}[$mapPos$ preserves identities]\\
$\forall (a: Container). \forall (x: StarShp\ a) . \forall (y : StarPos\ a\ x). mapPos_{id}\ x\ y = y$
\end{lemma}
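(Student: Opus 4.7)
The plan is to prove the lemma by structural induction on $x : StarShp\ a$, in parallel with the preceding lemma for $mapShp$. Since $mapPos$ is defined by pattern matching on the shape, the two branches $Done$ and $More$ yield a base case and an inductive case directly. A preliminary observation is required: the statement only typechecks because the preceding lemma already gives $mapShp_{id}\ x = x$, which reconciles the type $StarPos_a(mapShp_{id}\ x)$ expected by $mapPos$ with the type $StarPos_a\ x$ in which $y$ is quantified. In a mechanised setting this is either resolved definitionally by dependent pattern matching on $x$, or by explicit transport along the previous lemma.

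In the base case $x = Done$, the type $StarPos_a\ Done$ reduces to $\top$, so $y$ is forced to be $()$. The defining equation $mapPos_{id}\ Done\ () = ()$ closes the goal immediately.

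In the inductive case $x = More\ x_1\ x_2$, the position $y$ unpacks as a pair $(y_1, y_2)$ with $y_1 : a'\ x_1$ and $y_2 : StarPos_a(x_2\ y_1)$. Instantiating the definition of $mapPos$ at the identity morphism (so $f = id$ and $f'_{x_1} = id$) I obtain
\[
mapPos_{id}\ (More\ x_1\ x_2)\ (y_1, y_2) = (id\ y_1,\; mapPos_{id}\ y_2) = (y_1,\; mapPos_{id}\ y_2).
\]
The first component matches $y_1$ on the nose, and the second reduces to $y_2$ by the induction hypothesis applied at the sub-shape $x_2\ y_1$, closing the case.

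The main obstacle I anticipate is the dependency bookkeeping around this recursive call: the type of $y_2$ mentions $x_2\ y_1$ directly, whereas the corresponding argument that the recursion expects has type $StarPos_a(mapShp_{id}(x_2\ y_1))$. Bridging this gap requires rewriting along $mapShp_{id} \circ x_2 = x_2$, which is a pointwise instance of the previous lemma. Apart from this reindexing, every step is an immediate unfolding of definitions.
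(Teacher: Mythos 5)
Your proof is correct and follows essentially the same route as the paper: induction on $x$, with the $Done$ case closed by computation and the $More$ case closed by the induction hypothesis on the second component of the pair. Your additional remark about transporting along $mapShp_{id}\ x = x$ to make the statement and the recursive call typecheck is a genuine subtlety that the paper's prose glosses over but that its mechanisation must also handle.
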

\begin{proof}We perform a proof by induction on $x$\\
\begin{itemize}
    \item In the base case where $x = Done$ then $y = \top$ and $mapPos_{id} Done \top$ evaluates to $\top$ completing the proof.
    \item In the inductive case where $x = More\ x_1\ x_2$ then the goal becomes\\
    $mapPos_{id} (More\ x_1\ x_2) (y_1, y_2) = (y_1, y_2)$, the left side evaluates to\\
    $(y_1, mapPos_{id}\ y_1\ y_2)$, which is
    identical to the goal by induction on $(y_1, y_2)$.
\end{itemize}
\end{proof}

\begin{lemma}[$mapShp$ preserves composition]\\
$\forall(a, b, c : Container) (f : a\Rightarrow b) (g : b \Rightarrow c). \forall(x : StarShp\ a). mapShp_{g\circ f} x = mapShp_g (mapShp_f\ x)$
\end{lemma}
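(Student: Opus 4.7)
The plan is to proceed by structural induction on $x : StarShp\ a$, mirroring the structure of the two previous lemmas in this appendix. Write $f = (f_q \lhd f_r)$ and $g = (g_q \lhd g_r)$, so by definition of container morphism composition~\ref{def:cont-comp} we have $g \circ f = (g_q \circ f_q \lhd \lambda x.\ f_r(x) \circ g_r(f_q(x)))$.

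In the base case $x = Done$, both sides reduce immediately to $Done$: the left because $mapShp_{g\circ f}\ Done = Done$ by definition, and the right because $mapShp_f\ Done = Done$ and then $mapShp_g\ Done = Done$.

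In the inductive case $x = More\ x_1\ x_2$, I would unfold both sides and check that they agree head-symbol and continuation. On the left, $mapShp_{g\circ f}(More\ x_1\ x_2)$ reduces to
\[
More\ (g_q(f_q\ x_1))\ \bigl(\lambda v.\ mapShp_{g\circ f}\ (x_2\ (f_r(x_1)(g_r(f_q\ x_1)\ v)))\bigr),
\]
using the composite backward map of $g \circ f$ at the index $x_1$. On the right, applying $mapShp_f$ first and then $mapShp_g$ produces
\[
More\ (g_q(f_q\ x_1))\ \bigl(\lambda v.\ mapShp_g\ (mapShp_f\ (x_2\ (f_r(x_1)(g_r(f_q\ x_1)\ v))))\bigr).
\]
The head arguments match on the nose. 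For the continuations, pointwise at any $v$, the inductive hypothesis applied to the subtree $x_2(f_r(x_1)(g_r(f_q\ x_1)\ v)) : StarShp\ a$ gives exactly the required equality.

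The main obstacle is bookkeeping rather than mathematical: one must be careful to thread the backward part of the composite morphism through the right index ($x_1$ on the left, $f_q\ x_1$ on the right after one step of $mapShp_f$) and recognise that the two composites $f_r(x_1) \circ g_r(f_q\ x_1)$ on the left and $f_r(x_1)$ then $g_r(f_q\ x_1)$ inside the nested $mapShp_f \circ mapShp_g$ on the right line up definitionally. Once that alignment is spelled out, the recursive call is on a structurally smaller argument and the induction closes without any further subtlety; extensional equality of the two continuations reduces to pointwise equality, which is exactly the inductive hypothesis.
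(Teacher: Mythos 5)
Your proposal is correct and follows essentially the same route as the paper's proof: structural induction on $x$, immediate reduction in the $Done$ case, and in the $More$ case unfolding both sides to matching head arguments and continuations before closing with the inductive hypothesis on the subtree $x_2(f_r(x_1)(g_r(f_q\,x_1)\,v))$. The only cosmetic difference is that the paper carries out the continuation comparison in point-free style while you argue pointwise, which makes explicit the appeal to extensionality that the paper leaves implicit.
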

\begin{proof}
    \setlength{\leftmargini}{0pt} 
\setlength{\itemsep}{5pt} 
\setlength{\parsep}{5pt} 
    We define $f = f_1 \lhd f_2, g = g_1 \lhd g_2$ to refer to the forward and backward
maps of $f$ and $g$, then proceed by induction on $x$ as above.\\
\begin{itemize}
\item In the base case where $x = Done$ then $mapShp_{g\circ f} Done$ evaluates to $Done$ completing the proof.
\item In the inductive case where $x = More\ x_1\ x_2$ then we have:

    \[
    \renewcommand{\arraystretch}{1.5} 
\begin{array}{rl}
& mapShp_{g\circ f} (More\ x_1\ x_2)\\
    & \quad \text{(By definition of $f ; g$)}\\
\equiv & mapShp_{g \circ f \lhd f' \circ g'}\ (More\ x_1\ x_2) \\
    & \quad \text{(By definition of $mapShp$)}\\
\equiv & More (g (f(x_1))) (mapShp_{g\circ f} \circ x_2 \circ f'_{x_1} \circ g'_{f(x_1)})\\
    & \quad \text{(By induction)}\\
\equiv & More (g (f(x_1))) (mapShp_g \circ mapShp_f \circ x2 \circ f'_{x_1} \circ g'_{f(x_2)}) \\
     & \quad\text{(By definition of $mapShp_g$)}\\
\equiv & mapShp_g (More (f(x_1)) (mapShp_f \circ x_2 \circ f'_{x_1}) \\
    & \quad \text{(By definition of $mapShp_f$)}\\
\equiv & mapShp_g (mapShp_f (More\ x_1\ x_2))
\end{array}
\]
\end{itemize}
    \renewcommand{\arraystretch}{1} 
\end{proof}

\begin{lemma}[$mapPos$ preserves composition]\\
$\forall(a, b, c : Container) (f : a \Rightarrow b) (g : b \Rightarrow c). \\\forall (x : StarShp\ a).
\forall (y : StarPos_{g\circ f} (mapShp_{g\circ f}\ x)). mapPos_{g\circ f}\ y = $\\$mapPos_g\ (mapPos_f\ y)$
\end{lemma}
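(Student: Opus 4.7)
The plan is to mirror the structure of the preceding three lemmas and proceed by induction on $x : StarShp\ a$, which is the only piece of data that both $mapShp$ and $mapPos$ recurse on. I would begin by fixing notation $f = f_1 \lhd f_2$ and $g = g_1 \lhd g_2$ so that the composite morphism $g \circ f$ has forward part $g_1 \circ f_1$ and backward part $\lambda x.\, f_2(x) \circ g_2(f_1(x))$, matching Definition~\ref{def:cont-comp}. Then for any shape $x$ I would unfold $mapPos_{g \circ f}$ and $mapPos_g \circ mapPos_f$ side by side and verify they agree pointwise.

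In the base case $x = Done$ we have $mapShp_{g \circ f}\ Done = Done$, so the index $y$ must be $\top$; both sides collapse to $\top$ and the equation holds definitionally. In the inductive case $x = More\ x_1\ x_2$, the shape $mapShp_{g \circ f}(More\ x_1\ x_2)$ evaluates to $More\ (g_1(f_1(x_1)))\ (mapShp_{g \circ f} \circ x_2 \circ f_2(x_1) \circ g_2(f_1(x_1)))$, so $y$ takes the form $(y_1, y_2)$ with $y_1 : c'(g_1(f_1(x_1)))$ and $y_2 : StarPos_{g \circ f}(mapShp_{g \circ f}(x_2(f_2(x_1)(g_2(f_1(x_1))(y_1)))))$. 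Computing the left-hand side gives
\begin{equation*}
mapPos_{g \circ f}\ (More\ x_1\ x_2)\ (y_1, y_2) = \bigl(f_2(x_1)(g_2(f_1(x_1))(y_1)),\ mapPos_{g \circ f}^{x_2(\cdots)}\ y_2\bigr),
\end{equation*}
while the right-hand side, applying $mapPos_f$ to $mapPos_g(y_1, y_2)$, produces the same first component by definition of composition and the same second component by the induction hypothesis applied at the continuation $x_2(f_2(x_1)(g_2(f_1(x_1))(y_1)))$.

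The main obstacle I expect is bookkeeping the dependent types: $y_2$ lives over an index that itself depends on $y_1$ run backward through both $g_2$ and $f_2$, so one has to be careful that the recursive call of $mapPos_f$ on the $mapPos_g$-output lands in exactly the same $StarShp_a$-fibre as the direct recursive call of $mapPos_{g \circ f}$. This is precisely why the previous lemma ($mapShp$ preserves composition) is needed as a prerequisite: it guarantees that the shape over which the inductive instance of $mapPos$ is invoked on the right-hand side matches the shape on the left-hand side, so the induction hypothesis is well-typed and directly discharges the equation.
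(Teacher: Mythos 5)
Your proposal is correct and follows essentially the same route as the paper: induction on $x : StarShp\ a$, with the $Done$ case holding definitionally and the $More$ case discharged by unfolding the composite's backward map $\lambda x.\, f_2(x) \circ g_2(f_1(x))$ and applying the induction hypothesis to the second projection at the continuation $x_2(f_2(x_1)(g_2(f_1(x_1))(y_1)))$. Your explicit observation that the $mapShp$-composition lemma is needed to make the statement (and the inductive call) well-typed is a point the paper leaves implicit, but it does not change the argument.
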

\begin{proof}
    Like above, we perform the proof by induction on $x$.
    \begin{itemize}
        \item In the base case where $x = Done$, then $mapPos_{g \circ f}\ Done\ \top$ evaluates to $\top$ and so does $mapPos_f\ Done\ (mapPos_g\ Done\ \top)$ completing the proof.
        \item In the inductive case where $x = More\ x1\ x2$, then the goal becomes \\
            $mapPos_{g \circ f}\ (More\ x1\ x2)\ (y1,y2) = $\\
            $mapPos_f (More\ x1\ x2)\ (mapPos_g (mapShp_f (More\ x1\ x2)) (y1, y2))$
    \end{itemize}
    \renewcommand{\arraystretch}{1.5} 
    \[
\begin{array}{rl}
         & mapPos_{g\lhd g' \circ f\lhd f'} (More\ x1\ x2) (y1 , y2)\\
    \equiv & \quad\text{(By definition of $g\lhd g' \circ f\lhd f'$)}\\
         & mapPos_{g \circ f\lhd \lambda x. f' x \circ (g' (f x))} (More\ x1\ x2) (y1 , y2)\\
    \equiv & \quad\text{(By definition of $mapPos_{g \circ f\lhd \lambda x. f' x \circ (g' (f x))}$)}\\
         & (f'\ y1\ (g'\ (f\ y1)), mapPos_{g\circ f}\ (x2\ (f'\ y1 (g'\ (f\ y1))))\ y2) \\
    \equiv & \quad\text{(By induction on the second projection)} \\[-8pt]
         & \parbox{0.8\textwidth}{\begin{align*}
        (f'\ y1\ (g'\ (f\ y1)), & mapPos_f\ (x2\ (f'\ x1\ (g'\ (f\ x1)\ y1))) \\
                                & (mapPos_g\ (mapStarShp_f\ (x2\ (f'\ x1\ (g'\ (f\ x1)\ y1))))\ y2))
         \end{align*}}\vspace{-7pt}\\
    \equiv & \quad \text{(By definition of $mapPos_f$)} \\
         & \parbox{\textwidth}{\begin{align*}mapPos_f\ & (More\ x1\ x2) (g'\ (f\ x1)\ y1 ,\\
                                                       & mapPos_g\ (mapStarShp_f (x2\ (f\ x1\ (g'\ (f\ x1)\ y1)))) y2)\end{align*}}\\
    \equiv & \quad \text{(By definition of $mapPos_g$)}\\
         & \parbox{\textwidth}{\begin{align*}mapPos_f\ & (More\ x1\ x2) \\
                                                       & (mapPos_g\ (More\ (f\ x1)\ (mapStarShp_f\circ x2 \circ (f'\ x1))))) (y1 , y2))\end{align*}}\\
    \equiv & \quad \text{(By definition of $mapShp_f$)}\\
         & mapPos_f\ (More\ x1\ x2) (mapPos_g\ (mapShp_f (More\ x1\ x2)) (y1, y2))
\end{array}
\]
\end{proof}

\end{document}